\documentclass[12pt]{amsart}

\usepackage[centertags]{amsmath}
\usepackage{amsthm,amsfonts}
\usepackage{xcolor}

\usepackage{amssymb}
\usepackage{epsfig}
\usepackage{amssymb,latexsym}

\setlength{\textheight}{22cm} \setlength{\textwidth}{16cm}
\setlength{\oddsidemargin}{.2cm} \setlength{\evensidemargin}{.2cm}
\setlength{\topmargin}{0.25cm}

\begin{document}

\theoremstyle{plain}
\newtheorem{theorem}{Theorem}[section]
\newtheorem{lemma}[theorem]{Lemma}
\newtheorem{corollary}[theorem]{Corollary}
\newtheorem{proposition}[theorem]{Proposition}
\newtheorem{question}[theorem]{Question}
\theoremstyle{definition}
\newtheorem{prop}{Proposition}[theorem]
\newtheorem{obs}{Observation}[theorem]
\newtheorem{remark}[theorem]{Remark}
\newtheorem{cor}{Corollary}[theorem]
\newtheorem{example}[theorem]{Example} 
\newtheorem{definition}[theorem]{Definition}

\newcommand{\binomial}[2]{\left(\begin{array}{c}#1\\#2\end{array}\right)}
\newcommand{\zar}{{\rm zar}}
\newcommand{\an}{{\rm an}}
\newcommand{\red}{{\rm red}}
\newcommand{\codim}{{\rm codim}}
\newcommand{\rank}{{\rm rank}}
\newcommand{\Pic}{{\rm Pic}}
\newcommand{\Div}{{\rm Div}}
\newcommand{\Hom}{{\rm Hom}}
\newcommand{\im}{{\rm im}}
\newcommand{\Spec}{{\rm Spec}}
\newcommand{\sing}{{\rm sing}}
\newcommand{\reg}{{\rm reg}}
\newcommand{\Char}{{\rm char}}
\newcommand{\Tr}{{\rm Tr}}
\newcommand{\res}{{\rm res}}
\newcommand{\tr}{{\rm tr}}
\newcommand{\supp}{{\rm supp}}
\newcommand{\Gal}{{\rm Gal}}
\newcommand{\Min}{{\rm Min \ }}
\newcommand{\Max}{{\rm Max \ }}
\newcommand{\Span}{{\rm Span  }}

\newcommand{\Frob}{{\rm Frob}}
\newcommand{\lcm}{{\rm lcm}}


\long\def\symbolfootnote[#1]#2{\begingroup%
\def\thefootnote{\fnsymbol{footnote}}\footnote[#1]{#2}\endgroup}

\newcommand{\soplus}[1]{\stackrel{#1}{\oplus}}
\newcommand{\dlog}{{\rm dlog}\,}    
\newcommand{\limdir}[1]{{\displaystyle{\mathop{\rm
lim}_{\buildrel\longrightarrow\over{#1}}}}\,}
\newcommand{\liminv}[1]{{\displaystyle{\mathop{\rm
lim}_{\buildrel\longleftarrow\over{#1}}}}\,}
\newcommand{\boxtensor}{{\Box\kern-9.03pt\raise1.42pt\hbox{$\times$}}}
\newcommand{\sext}{\mbox{${\mathcal E}xt\,$}}
\newcommand{\shom}{\mbox{${\mathcal H}om\,$}}
\newcommand{\coker}{{\rm coker}\,}
\renewcommand{\iff}{\mbox{ $\Longleftrightarrow$ }}
\newcommand{\onto}{\mbox{$\,\>>>\hspace{-.5cm}\to\hspace{.15cm}$}}

\newenvironment{pf}{\noindent\textbf{Proof.}\quad}{\hfill{$\Box$}}

\newcommand{\sA}{{\mathcal A}}
\newcommand{\sB}{{\mathcal B}}
\newcommand{\sC}{{\mathcal C}}
\newcommand{\sD}{{\mathcal D}}
\newcommand{\sE}{{\mathcal E}}
\newcommand{\sF}{{\mathcal F}}
\newcommand{\sG}{{\mathcal G}}
\newcommand{\sH}{{\mathcal H}}
\newcommand{\sI}{{\mathcal I}}
\newcommand{\sJ}{{\mathcal J}}
\newcommand{\sK}{{\mathcal K}}
\newcommand{\sL}{{\mathcal L}}
\newcommand{\sM}{{\mathcal M}}
\newcommand{\sN}{{\mathcal N}}
\newcommand{\sO}{{\mathcal O}}
\newcommand{\sP}{{\mathcal P}}
\newcommand{\sQ}{{\mathcal Q}}
\newcommand{\sR}{{\mathcal R}}
\newcommand{\sS}{{\mathcal S}}
\newcommand{\sT}{{\mathcal T}}
\newcommand{\sU}{{\mathcal U}}
\newcommand{\sV}{{\mathcal V}}
\newcommand{\sW}{{\mathcal W}}
\newcommand{\sX}{{\mathcal X}}
\newcommand{\sY}{{\mathcal Y}}
\newcommand{\sZ}{{\mathcal Z}}

\newcommand{\A}{{\mathbb A}}
\newcommand{\B}{{\mathbb B}}
\newcommand{\C}{{\mathbb C}}
\newcommand{\D}{{\mathbb D}}
\newcommand{\E}{{\mathbb E}}
\newcommand{\F}{{\mathbb F}}
\newcommand{\G}{{\mathbb G}}
\newcommand{\HH}{{\mathbb H}}
\newcommand{\I}{{\mathbb I}}
\newcommand{\J}{{\mathbb J}}
\newcommand{\M}{{\mathbb M}}
\newcommand{\N}{{\mathbb N}}
\renewcommand{\P}{{\mathbb P}}
\newcommand{\Q}{{\mathbb Q}}
\newcommand{\T}{{\mathbb T}}
\newcommand{\U}{{\mathbb U}}
\newcommand{\V}{{\mathbb V}}
\newcommand{\W}{{\mathbb W}}
\newcommand{\X}{{\mathbb X}}
\newcommand{\Y}{{\mathbb Y}}
\newcommand{\Z}{{\mathbb Z}}


\newcommand{\Fqm}{\mathbb{F}_{q^m}}
\newcommand{\Fq}{\mathbb{F}_q}
\newcommand{\Fp}{\mathbb{F}_p}
\newcommand{\Fpl}{\mathbb{F}_{p^l}}
\newcommand{\fqn}{\mathbb{F}_q^n}
\newcommand{\be}{\begin{eqnarray}}
\newcommand{\ee}{\end{eqnarray}}
\newcommand{\nn}{{\nonumber}}
\newcommand{\dd}{\displaystyle}
\newcommand{\ra}{\rightarrow}
\newcommand{\bigmid}[1][12]{\mathrel{\left| \rule{0pt}{#1pt}\right.}}
\newcommand{\cl}{${\rm \ell}$}
\newcommand{\clp}{${\rm \ell^\prime}$}
\title[]{On the Hull and Complementarity of One Generator Quasi-Cyclic Codes and Four-Circulant Codes}
\author[]{Zohreh Aliabadi, Cem G\"uner\.{I}, Tekg\"ul Kalayc\i \\
}

\maketitle
\begin{center}
Sabanc\i~ University, Faculty of Engineering and Natural Sciences, 34956 \.{I}stanbul, Turkey\\
E-mail: \{zaliabadi, cem.guneri, tekgulkalayci\}@sabanciuniv.edu\\
\end{center}
\begin{abstract}
We study one generator quasi-cyclic codes and four-circulant codes, which are also quasi-cyclic but have two generators. We state the hull dimensions for both classes of codes in terms of the polynomials in their generating elements. We prove results such as the hull dimension of a four-circulant code is even and one dimensional hull for double-circulant codes, which are special one generator codes, is not possible when the alphabet size $q$ is congruent to 3 mod 4. We also characterize linear complementary pairs among both classes of codes. Computational results on the code families in consideration are provided as well.  
\end{abstract}

\noindent {\bf Keywords} Hull of a code, linear complementary dual (LCD) code, linear complementary pair (LCP) of codes, quasi-cyclic code, double circulant code, four circulant code. \\[.5em]
	{\bf Mathematics Subject Classification} 94B05  94B15

\section{Introduction}
\label{sec1}
The hull of a linear code $C$ is defined as $C\cap C^{\perp}$, where $C^{\perp}$ is the Euclidean dual code. This concept was introduced by Assmus and Key in \hspace{1sp}\cite{AFFINE} and later found applications in various problems, such as determining permutation equivalence between codes (\cite{PERBET}) and construction of quantum error-correcting codes (\cite{1DIMAG}).   The hull of a linear code can also be defined with respect to other inner products (see  \cite{GALOISHULL} and \cite{GALOISLIN} for the study of the hull with respect to the Galois inner product).

The smallest possible hull dimension is 0 and codes having trivial hull are called linear complementary dual (LCD) codes. LCD codes were introduced by Massey in \cite{LCD}. Note that the name LCD is justified, since $C\oplus C^{\perp}=\mathbb{F}_q^n$ for an LCD code $C \subseteq \mathbb{F}_q^n$. LCD codes are generalized to linear complementary pair (LCP) of codes, where a pair $(C,D)$ of linear codes in $\F_q^n$ is called LCP of codes if $C\oplus D=\F_q^n$. LCD and LCP of codes have drawn much attention in recent years due to their applications in cryptography in the context of side channel and fault injection attacks (see \cite{LCDSTRE}, \cite{SCA}, \cite{SCAFIA}). In this application, the security parameter of an LCP $(C, D)$ of codes is defined as $\min \lbrace d(C), d(D^{\perp})\rbrace$. LCD and LCP of codes have been very actively studied and we refer to \cite{LCP}, \cite{LINEQLCD}, \cite{LCDQC}, \cite{4CLCD} for some of the recent developments. 

The next smallest hull dimension is 1, which is also of interest. We refer to some of the recent contributions in the literature, where codes with hull dimension 1 are studied particularly (\hspace{1sp}\cite{1DIMSEMI}, \cite{1dim}, \cite{1DIMAG}).

Quasi-cylic (QC) codes is one of the well-studied families in coding theory. The general theory of QC codes is laid out in \cite{AQC, GENERATOR} and one generator QC codes are throgouhly studied in \cite{1GQC}. LCD and LCP of general QC codes are addressed in \cite{LCDQC} and in \cite{LCP}, respectively. This article investigates the hull and complementarity of special classes of QC codes to obtain concrete results in terms of the polynomials in their generating elements. We study one generator QC codes and four-circulant (FC) codes, which are also QC codes but with two generators. A particular case of one generator QC codes, that is double-circulant (DC) codes, is also addressed. Section 2 provides the background material needed on the hull dimension of linear codes and on QC codes.  In Section 3, we prove a formula for the hull dimension of one generator QC codes and also characterize the LCP of one generator QC codes. For the special case of DC codes, we provide a necessary and sufficient condition for one hull dimension. Section 4 studies FC codes. Complementary dual FC codes were studied in \cite{4CLCD}. Here, we state a formula for the hull dimension of FC codes, which shows that there exists no FC code with an odd hull dimension. We also characterize the LCP of FC codes. By the results of Carlet et al. (\cite{LINEQLCD}, \cite{SIGMALCD}), optimal parameter problem for LCD and LCP of codes are of interest for binary and ternary codes. Motivated by this, computational results on the parameters of binary and ternary LCD and LCP codes, as well as codes with one-dimensional hull, are presented for the code families studied in this paper. Some of the parameters we find are optimal according to \cite{CDTB}.

\section{Preliminaries}\label{prelim}
For a linear $[n,k]$ code $C$ over a finite field $\F_q$, the hull of $C$ is defined as 
$$\hbox{Hull}(C):=C\cap C^{\perp},$$
where $C^\bot$ denotes the dual of $C$ with respect to Euclidean inner product. Let $h(C)=\dim(\hbox{Hull}(C))$ denote the hull dimension. If $q$ is a square one can also define the Hermitian hull of $C$
 $$\hbox{Hull}_h(C):=C \cap C^{\perp_h},$$
where $C^{\bot_h}$ is the dual with respect to Hermitian inner product. We denote the Hermitian hull dimension by $h_h(C)=\dim(\hbox{Hull}_h(C))$.

If $G$ denotes the $k\times n$ generator matrix of $C$, we have 
\begin{equation}\label{hull dim}
h(C)=k-\hbox{rank}(GG^T) \ \mbox{(\cite[Proposition 3.1]{HULL=RANK})}.
\end{equation}
If we denote by $\bar{G}$ the matrix obtained from $G$ by raising all entries to power $\sqrt{q}$, then the Hermitian hull dimension is given by 
\begin{equation}\label{herm hull dim}
h_h(C)=k-\hbox{rank}(G\bar{G}^T) \ \mbox{(\cite[Proposition 3.2]{HULL=RANK})}.
\end{equation}

The minimum hull dimension is 0, which amounts to $C\cap C^\bot=\{0\}$ (equivalently $C\oplus C^\bot =\F_q^n$). In this case, $C$ is called a linear complementary dual (LCD) code. A pair $(C,D)$ of linear codes of length $n$ over $\mathbb{F}_q$ is called a linear complementary pair (LCP) of codes if $C\oplus D=\mathbb{F}_q^n$. It is clear that LCD codes is a special case of LCP of codes. Namely, $C$ is LCD if and only if $(C,C^\bot)$ is LCP of codes. The security parameter of LCP of codes $(C,D)$ is defined as $\min \lbrace d(C), d(D^{\perp})\rbrace$. Note that for an LCD code $C$, the security parameter is simply $d(C)$. 

An $[n,k]$ linear code $C$ over $\F_q$ is called a quasi-cyclic (QC) code of index $\ell$ if its codewords are invariant under shift by $\ell$ units, and $\ell$ is the smallest positive integer with this property. It is known that the index of a QC code is a divisor of its length, say $ n=m\ell$. Here, $m$ is referred to as the co-index of $C$.  We refer to a QC code of index $\ell$ as $\ell$-QC code for simplicity. 

It is clear that cyclic codes correspond to QC codes of index 1. Similar to cyclic codes, QC codes also have rich algebraic structures. Let us denote the space of $m\times \ell$ arrays over $\F_q$ by $\F_q^{m\times \ell}$ and view an $\ell$-QC code of length $m\ell$ as a subspace of this space. With this notation, $C$ being index $\ell$ amounts to codewords being closed under row shift. If we let $R_m:=\mathbb{F}_q[x]/\langle x^m-1\rangle$, then the following map induces a one-to-one correspondence between $\ell$-QC codes in $\F_q^{m\times \ell}$ and $R_m$-submodules of $R_m^\ell$ (\cite[Lemma 3.1]{AQC}): 
\begin{equation*}\begin{array}{cccc} \label{identification}
\phi: &  \F_q^{m\ell} & \longrightarrow & R_m^\ell  \\
& c=\left(c_{ij}\right) & \longmapsto & (c_0(x),c_1(x),\ldots ,c_{\ell-1}(x)) ,
\end{array}\end{equation*}
where 
$$c_j(x):=  \displaystyle{ \sum_{i=0}^{m-1}} c_{ij}x^i =  c_{0j}+c_{1j}x+c_{2j}x^2+\cdots + c_{m-1,j}x^{m-1} \in R_m $$
for each $0\leq j \leq \ell-1$. 

\textbf{Throughout the manuscript, we assume that $q$ and $m$ are relatively prime.}  With this assumption, we have the following factorization into distinct irreducible polynomials in $\F_q[x]$:
\begin{equation}\label{factors}
x^m-1=\prod_{i=1}^s g_i(x) \prod_{j=1}^t (h_j(x)h_j^{\ast}(x)).
\end{equation}
Here $g_i(x)$ is self-reciprocal for $1\leq i \leq s$ and $h_j(x)$ and $h_j^{\ast}(x)$ are reciprocal pairs for $1 \leq j \leq t$, where the reciprocal of a monic polynomial $f(x)$ with non-zero constant term is defined as 
$$f^{\ast}(x)=f(0)^{-1} x^{\deg f} f(x^{-1}).$$
 By the Chinese Remainder Theorem (CRT), $R_m^\ell$ decomposes as
$$R_m^\ell=\left(\bigoplus_{i=1}^s \mathbb{G}_i^{\ell}\right) \bigoplus \left(\bigoplus_{j=1}^t \left(\mathbb{H'}_j^\ell\bigoplus \mathbb{H''}_j^\ell\right) \right),$$
where for $1\leq i\leq s$, $\mathbb{G}_i=\mathbb{F}_q[x]/ \langle g_i(x) \rangle$, and for $1\leq j \leq t$, $\mathbb{H'}_j=\mathbb{F}_q[x] / \langle h_j(x) \rangle$ and $\mathbb{H''}_j=\mathbb{F}_q[x]/ \langle h_j^{\ast}(x)\rangle$. If $\xi$ is a primitive $m^{th}$ root of unity over $\F_q$  and $\xi^{u_i}$, $\xi^{v_j}$ and $\xi^{-v_j}$ are roots of $g_i(x)$, $h_j(x)$ and $h_j^{\ast}(x)$, respectively, then we have
$\mathbb{G}_i\cong \mathbb{F}_q(\xi^{u_i})$, $\mathbb{H'}_j\cong \mathbb{F}_q(\xi^{v_j})\cong \mathbb{F}_q(\xi^{-v_j}) \cong \mathbb{H{''}}_j$. Since the degree of a self-reciprocal polynomial is even, the degree of $\G_i$ over $\F_q$ is even for all $i$, except the components corresponding to the self-reciprocal irreducible factors $(x\pm 1)$ of $x^m-1$. 

Via the CRT decomposition of $R_m^\ell$, an $\ell$-QC code $C$ can be decomposed as 
\begin{equation}\label{CRTQC}
	C=\left(\bigoplus_{i=1}^s C_i \right)\bigoplus \left(\bigoplus_{j=1}^t \left(C'_j \bigoplus C''_j \right) \right),
\end{equation}
where $C_i, C'_j , C''_j$ are linear codes of length $\ell$ over the fields $\mathbb{G}_i , \mathbb{H'}_j , \mathbb{H''}_j$, respectively. These are called the constituents of $C$. It is known that the dual code $C^\bot$ is also $\ell$-QC code and it decomposes into constituents as
\begin{equation}\label{CRTDQC}
	C^{\perp}=\left(\bigoplus_{i=1}^s C_i^{\perp_h} \right) \bigoplus \left(\bigoplus_{j=1}^t \left({C''}_j^{\perp}\bigoplus {C'}_j^{\perp}\right) \right).
\end{equation}
We refer to \cite{AQC,GENERATOR} for (\ref{CRTQC}) and (\ref{CRTDQC}). Hence the hull dimension of the $\ell$-QC code $C$ over $\F_q$ is
\begin{equation} \label{QC hull dim}
h(C)=\sum_{i=1}^s \deg g_i(x) \; h_h(C_i)+\sum_{j=1}^t \deg h_j(x)[\dim({C'}_j \cap {C''}_j^{\perp}) + \dim({C''}_j\cap {C'}_j^{\perp})]. 
\end{equation} 
If $C$ and $D$ are $\ell$-QC codes of length $m\ell$ over $\F_q$ with constitutents $C_i,C'_j,C''_j$ and $D_i,D'_j,D'',j$, respectively (cf. (\ref{CRTQC})), then $(C,D)$ is LCP of codes if and only if 
\begin{equation}\label{LCPQC cond}
\mbox{$(C_i , D_i)$, $(C'_j, D'_j)$ and $(C{{''}}_j, D{{''}}_j)$ are LCP, for all $i,j$  (\cite[Theorem 3.1]{LCP}).} 
\end{equation}
As a consequence, $C$ is LCD if and only if
\begin{equation} \label{LCDQC cond}
\mbox{$C_i\cap C_i^{\perp_h}=\{0\}$, $C'_j \cap {C''}_j^{\perp}=\{0\}$ and $C''_j\cap {C'}_j^{\perp}=\{0\}$ for all $i,j$ (\cite[Theorem 3.1]{LCDQC}).}
\end{equation}

\section{One-Generator QC Codes}\label{1-gen sec}
We continue with the notation and assumptions in Section \ref{prelim}. In particular, we assume $\gcd(m,q)=1$. 

Let $C=\langle (a_1(x), \ldots, a_\ell(x)) \rangle \subset R_m^\ell$ be a 1-generator $\ell$-QC code. Constituents of $C$ can be described as follows (\cite[Equation 2.3]{LCDQC}):
\begin{eqnarray} \label{consts 1G}
C_i &= & Span_{\mathbb{G}_i}\lbrace (a_1(\xi^{u_i}), \ldots , a_\ell(\xi^{u_i})) \rbrace, \nonumber \\
C'_j&= & Span_{\mathbb{H}'_j}\lbrace (a_1(\xi^{v_j}), \ldots , a_\ell(\xi^{v_j})) \rbrace,\\
C{''}_j&= & Span_{\mathbb{H}{''}_j}\lbrace (a_1(\xi^{-v_j}), \ldots , a_\ell(\xi^{-v_j})) \rbrace. \nonumber
\end{eqnarray}
The generator polynomial of $C$ is defined by
$$g(x):= \gcd(a_1(x), \ldots, a_\ell(x), x^m-1).$$
The monic polynomial $h(x)$ of the least degree, which satisfies $h(x)a_i(x)=0$ for all $1\leq i \leq \ell$, is called the parity check polynomial of $C$. The polynomials $g(x)$ and $h(x)$ are unique, they satisfy the equation $g(x)h(x)=x^m-1$ in $\F_q[x]$ and   
$$\dim C=m-\deg g(x)=\deg h(x) \ \mbox{(cf. \cite{1GQC}}).$$ An $[m\ell,k]_q$ 1-generator $\ell$-QC code is called maximal if $k=m$.
For a maximal 1-generator QC code, we clearly have $g(x)=1$ and $h(x)=x^m-1.$

We start with describing the hull dimension of 1-generator QC codes. 

\begin{theorem}\label{1GHULL}
	Let $C=\langle (a_1(x), \ldots, a_\ell(x))\rangle$ be a 1-generator $\ell$-QC code of length $m\ell$ over $\mathbb{F}_q$, whose parity check polynomial is $h(x)$. Then the hull dimension of $C$ is $h(C)=\deg u(x)$, where 
	$$u(x)=\gcd\left(\sum_{r=1}^\ell a_r(x)a_r(x^{m-1}), h(x)\right).$$
\end{theorem}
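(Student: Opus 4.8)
The plan is to bypass the constituent decomposition (\ref{QC hull dim}) and compute the hull directly inside the module $R_m^\ell$, exploiting that for a $1$-generator code every codeword has the form $f(x)\cdot a$ with $a=(a_1(x),\dots,a_\ell(x))$ and $f(x)\in R_m$. Setting $\alpha(x):=\sum_{r=1}^\ell a_r(x)a_r(x^{m-1})$, I would first record two elementary facts: (i) the Euclidean inner product of the codewords corresponding to $u=(u_1,\dots,u_\ell)$ and $w=(w_1,\dots,w_\ell)$ in $R_m^\ell$ equals the coefficient of $x^0$ in $\sum_r u_r(x)w_r(x^{-1})$; and (ii) since $x^m=1$ in $R_m$ we have $a_r(x^{-1})=a_r(x^{m-1})$, so that $\alpha(x)=\sum_r a_r(x)a_r(x^{-1})$ in $R_m$.

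The key reduction is then to show that $f(x)\cdot a\in\mathrm{Hull}(C)$ if and only if $f(x)\alpha(x)\equiv 0\pmod{x^m-1}$. Indeed, $C$ is spanned over $\F_q$ by the shifts $x^k\cdot a$, so $f\cdot a\in C^{\perp}$ exactly when $\langle f\cdot a,\,x^k\cdot a\rangle=0$ for all $k$; by (i) this inner product is the coefficient of $x^k$ in $f(x)\alpha(x)$, and requiring all of these to vanish is precisely $f\alpha\equiv 0$ in $R_m$. Consequently $\mathrm{Hull}(C)=\{f\cdot a : f\in\langle d'\rangle\}$, where $d'=(x^m-1)/\gcd(\alpha,x^m-1)$ generates the annihilator ideal $\{f\in R_m : f\alpha\equiv 0\}$.

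It remains to compute the dimension of this image. I would use that the map $f\mapsto f\cdot a$ from $R_m$ onto $C$ has kernel equal to the annihilator of $a$, which is exactly $\langle h(x)\rangle$ (this also recovers $\dim C=\deg h$). Hence
$$h(C)=\dim\langle d'\rangle-\dim\big(\langle d'\rangle\cap\langle h\rangle\big)=\deg\lcm(d',h)-\deg d'=\deg h-\deg\gcd(d',h).$$
Because $\gcd(m,q)=1$, the polynomial $x^m-1$ is squarefree, so I can argue factor by factor: an irreducible $p\mid x^m-1$ divides $d'$ precisely when $p\nmid\alpha$, whence $\gcd(d',h)$ collects exactly the irreducible factors $p$ with $p\mid h$ and $p\nmid\alpha$. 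Subtracting degrees leaves the factors with $p\mid h$ and $p\mid\alpha$, that is, $\deg h-\deg\gcd(d',h)=\deg\gcd(\alpha,h)=\deg u$, as claimed.

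The step I expect to require the most care is the bookkeeping that makes $h(x)$, rather than $x^m-1$, appear in the final gcd. The irreducible factors $p$ of $x^m-1$ dividing the generator polynomial $g=(x^m-1)/h$ are exactly those at which the corresponding constituent of $C$ vanishes; such $p$ always divide $\alpha$ but must not be counted toward the hull. The annihilator identity $\ker(\,\cdot\,a)=\langle h\rangle$ and the passage from $\langle d'\rangle$ to its image in $C$ are precisely what strip out these spurious factors, so getting the $\lcm$/$\gcd$ degree arithmetic right is the crux. (Alternatively, one may insert (\ref{consts 1G}) into (\ref{QC hull dim}): each constituent is at most one-dimensional, the relevant Hermitian or Euclidean self-pairing of its generator equals $\alpha$ evaluated at the associated root of unity, and a short case analysis—most delicate when $g$ is not self-reciprocal, so that one member of a reciprocal pair of constituents vanishes while the other does not—produces the same count.)
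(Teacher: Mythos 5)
Your proof is correct, and it takes a genuinely different route from the paper's. The paper proves this theorem through the CRT machinery: it inserts the constituent description (\ref{consts 1G}) into the hull-dimension formula (\ref{QC hull dim}) and runs a three-case analysis (self-reciprocal factors $g_i$, where the relevant pairing on the constituent is Hermitian, and the two members of each reciprocal pair $h_j$, $h_j^{\ast}$) --- exactly the ``alternative'' you sketch in your closing parenthesis. You instead stay inside $R_m^\ell$ and use only the cyclic $R_m$-module structure: the pairing identity showing that $\langle f\cdot a, x^k\cdot a\rangle$ is the coefficient of $x^k$ in $f(x)\alpha(x)$, so that $\mathrm{Hull}(C)$ is the image of the annihilator ideal $\langle d'\rangle$ of $\alpha$ under $f\mapsto f\cdot a$; the identification $\ker(f\mapsto f\cdot a)=\langle h\rangle$ (which is exactly the paper's definition of the parity check polynomial, since the least-degree monic annihilator generates the annihilator ideal); and the $\lcm$/$\gcd$ degree bookkeeping, valid because $\gcd(m,q)=1$ makes $x^m-1$ squarefree. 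All three steps check out, and the last one correctly strips out the irreducible factors dividing $g(x)$, which divide $\alpha$ automatically but contribute nothing to the hull --- this is precisely the subtlety that forces $h(x)$, not $x^m-1$, into the final gcd, and your argument makes the reason transparent rather than hiding it in a case split. What your approach buys: it is elementary and self-contained, works entirely over $\F_q$, and avoids both the Hermitian duality on constituents and the reciprocal-pair analysis. What the paper's approach buys: the constituent formula (\ref{QC hull dim}) and the case analysis are an investment reused verbatim for Corollary \ref{1GENLCD}, the LCP results, and the four-circulant theorems of Section \ref{4-circ sec}, and that framework extends to codes with more than one generator, where your map $f\mapsto f\cdot a$ is no longer available.
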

\begin{proof}
By (\ref{consts 1G}), a constituent of $C$ is either 0 or 1 dimensional over its field of definition. We analyze each constituent's contribution to the hull in three cases. Recall that the polynomials $g_i(x),h_j(x),h^{\ast}_j(x)$ are irreducible factors of $x^m-1$ (cf. (\ref{factors})) and they correspond to the fields of definition of the constituents. On the other hand, the polynomials $g(x)$ and $h(x)$ stand for the generator and parity check polynomials of $C$, respectively.
 
\noindent \textbf{Case 1.} For any $i\in \{1,\ldots ,s\}$, $C_i\cap C_i^{\perp_h}\ne \lbrace 0 \rbrace$ if and only if  $C_i \ne \lbrace 0 \rbrace$ and $C_i\subseteq C_i^{\perp_h}$. Note that $C_i\not=\{0\}$ if and only if $g_i(x)\mid h(x)$. On the other hand, $C_i\subseteq C_i^{\perp_h}$ if and only if 
$$\sum_{r=1}^\ell a_r(\xi^{u_i})a_r(\xi^{-u_i})=0.$$ 
This is equivalent to the condition
$$g_i(x) \mid \sum_{r=1}^\ell a_r(x)a_r(x^{m-1}).$$

\noindent \textbf{Case 2.} For any $j\in \{1,\ldots ,t\}$, ${C'}_j \cap {C{''}}_j^{\perp} \ne \lbrace 0 \rbrace$ if and only if $C'_j\ne \lbrace 0 \rbrace$ and 
$C'_j \subseteq {C{''}}_j^{\perp}$. The first condition is equivalent to $h_j(x) \mid h(x)$, whereas the second condition amounts to 
$$\sum_{r=1}^\ell a_r(\xi^{v_j})a_r(\xi^{-v_j})=0.$$ 
This is equivalent to the condition
$$h_j(x) \mid \sum_{r=1}^\ell a_r(x)a_r(x^{m-1}).$$

\noindent \textbf{Case 3.} Arguing as in Case 2, we can see that ${C{''}}_j\cap {C'}_j^{\perp}\ne \lbrace 0 \rbrace$ if and only if $h_j^{\ast}(x)\mid h(x)$ and $h_j^{\ast}(x) \mid \sum_{r=1}^\ell a_r(x)a_r(x^{m-1})$.

Putting these together, we reach the result via (\ref{QC hull dim}). 
\end{proof}

The following is an immediate consequence of Theorem \ref{1GHULL}. We note that the LCD characterization for a special class of maximal 1-generator 2-QC codes (namely, double circulant codes) was given in \cite[ Theorem 5.1]{LCDQC}. Corollary \ref{1GENLCD} generalizes this result.
\begin{corollary}\label{1GENLCD}
Let $C=\langle (a_1(x), \ldots, a_\ell(x))\rangle$ be a 1-generator $\ell$-QC code of length $m\ell$ over $\mathbb{F}_q$, whose parity check polynomial is $h(x)$. Then $C$ is LCD if and only if 
\begin{align}\label{condition}
\gcd\left(\sum_{r=1}^\ell a_r(x)a_r(x^{m-1}), h(x)\right)=1.
\end{align}
\end{corollary}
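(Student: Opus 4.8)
The plan is to derive this corollary directly from Theorem \ref{1GHULL} by specializing the hull dimension formula to the case $h(C) = 0$. The key observation is that a code $C$ is LCD precisely when its hull is trivial, that is, when $h(C) = 0$. Since Theorem \ref{1GHULL} tells us that $h(C) = \deg u(x)$ where $u(x) = \gcd\left(\sum_{r=1}^\ell a_r(x)a_r(x^{m-1}), h(x)\right)$, the condition $h(C) = 0$ translates immediately into $\deg u(x) = 0$.

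The main step is then to argue that $\deg u(x) = 0$ is equivalent to $u(x) = 1$. First I would recall that $u(x)$ is defined as a gcd of polynomials in $\F_q[x]$, and by the standard convention a gcd is taken to be monic. A monic polynomial has degree $0$ if and only if it equals the constant polynomial $1$. Therefore $\deg u(x) = 0$ if and only if $u(x) = 1$, which is exactly condition (\ref{condition}).

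I do not anticipate a genuine obstacle here, since the corollary is a direct restatement of the theorem under the specialization $h(C)=0$; the only point requiring mild care is the monic-gcd convention ensuring that vanishing degree forces the gcd to be the unit constant. The proof would read roughly as follows.

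\begin{proof}
By definition, $C$ is LCD if and only if its hull is trivial, equivalently $h(C) = 0$. By Theorem \ref{1GHULL}, we have $h(C) = \deg u(x)$ with
$$u(x) = \gcd\left(\sum_{r=1}^\ell a_r(x)a_r(x^{m-1}), h(x)\right).$$
Since $u(x)$ is a (monic) greatest common divisor in $\F_q[x]$, its degree is zero if and only if $u(x) = 1$. Hence $C$ is LCD if and only if condition (\ref{condition}) holds.
\end{proof}
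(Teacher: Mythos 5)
Your proof is correct and matches the paper's approach: the paper states Corollary \ref{1GENLCD} as an immediate consequence of Theorem \ref{1GHULL}, and your argument (LCD $\Leftrightarrow$ $h(C)=0$ $\Leftrightarrow$ $\deg u(x)=0$ $\Leftrightarrow$ $u(x)=1$ by the monic gcd convention) is precisely the specialization the authors intend.
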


Yang and Massey showed that a cyclic code is LCD if and only if its generator polynomial is self-reciprocal (\cite{LCDCYCLIC}). The next result shows that a self-reciprocal generator polynomial is a necessary condition for an LCD 1-generator QC code. However, it is not sufficient as shown in Remark \ref{sr not suff}.
\begin{proposition}\label{1GLCDSELF}
Let $C=\langle (a_1(x), \ldots, a_\ell(x))\rangle$ be a 1-generator $\ell$-QC code with the generator polynomial $g(x)$. If $C$ is LCD then $g(x)$ is self-reciprocal.
\end{proposition}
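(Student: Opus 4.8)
The plan is to prove the contrapositive in spirit, but more directly I would argue that if $C$ is LCD, then every irreducible factor of $g(x)$ must be self-reciprocal, which forces $g(x)$ itself to be self-reciprocal. The key observation is the dual relationship between the generator polynomial $g(x)$ and the parity check polynomial $h(x)$, namely $g(x)h(x)=x^m-1$, together with the irreducible factorization (\ref{factors}). Since $g(x) \mid x^m-1$, every irreducible factor of $g(x)$ is one of the $g_i(x)$, $h_j(x)$, or $h_j^{\ast}(x)$ appearing in (\ref{factors}). The $g_i(x)$ are already self-reciprocal, so the content of the claim is that none of the reciprocal-pair factors $h_j(x)$ or $h_j^{\ast}(x)$ can divide $g(x)$ in isolation; they must either both divide $g(x)$ or both divide $h(x)$.

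First I would suppose toward a contradiction that some reciprocal factor, say $h_j(x)$, divides $g(x)$ while its reciprocal $h_j^{\ast}(x)$ does not. Since $g(x)h(x)=x^m-1$ and these are distinct irreducible factors appearing with multiplicity one in the factorization (\ref{factors}), $h_j(x) \mid g(x)$ means $h_j(x) \nmid h(x)$, and correspondingly $h_j^{\ast}(x) \nmid g(x)$ means $h_j^{\ast}(x) \mid h(x)$. I would then examine the constituents associated to this reciprocal pair. Because $h_j(x) \nmid h(x)$, the dimension criterion from Case 2 of Theorem \ref{1GHULL} gives $C'_j=\{0\}$, hence trivially $C'_j \cap {C''}_j^{\perp}=\{0\}$. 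The crux is the other constituent: since $h_j^{\ast}(x) \mid h(x)$, the constituent $C''_j$ is nonzero (one dimensional), and I need to show that its contribution to the hull is forced to be nonzero, contradicting the LCD hypothesis via (\ref{LCDQC cond}).

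The main obstacle, and the heart of the argument, is verifying that $C''_j \cap {C'}_j^{\perp} \ne \{0\}$ in this situation. Here I would use that ${C'}_j^{\perp}$ is the Euclidean dual of the zero code $C'_j=\{0\}$, so ${C'}_j^{\perp}$ is the entire ambient space $(\mathbb{H'}_j)^\ell$ (equivalently $(\mathbb{H''}_j)^\ell$ under the field identification). Consequently the divisibility condition $h_j^{\ast}(x) \mid \sum_{r=1}^\ell a_r(x)a_r(x^{m-1})$ from Case 3 is automatically satisfied, since the orthogonality constraint becomes vacuous. Combined with $h_j^{\ast}(x) \mid h(x)$, the criterion of Case 3 yields $C''_j \cap {C'}_j^{\perp} \ne \{0\}$. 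By (\ref{LCDQC cond}) this contradicts $C$ being LCD. Therefore no reciprocal factor can divide $g(x)$ without its partner, and since the self-reciprocal factors $g_i(x)$ pose no obstruction, $g(x)$ is self-reciprocal.

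I expect the delicate point to be keeping the field identifications and the dualities straight, particularly that the two constituents $C'_j$ and $C''_j$ interact through the cross-dual pairing $(C'_j, {C''}_j^{\perp})$ and $(C''_j, {C'}_j^{\perp})$ rather than self-dually, so that the vanishing of one constituent forces a full-space dual that the surviving constituent cannot avoid intersecting. Once this is pinned down the conclusion follows cleanly from the hull characterization already established.
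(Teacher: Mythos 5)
Your proof is correct, and it reaches the contradiction by a different mechanism than the paper. Both arguments begin identically: if $g(x)$ is not self-reciprocal, there is a reciprocal pair with $h_j(x)\mid g(x)$ and $h_j^{\ast}(x)\mid h(x)$. The paper then stays at the polynomial level: from $h_j(x)\mid a_r(x)$ for all $r$ it deduces $h_j^{\ast}(x)\mid a_r(x^{m-1})$, hence $h_j^{\ast}(x)$ divides $\gcd\left(\sum_{r=1}^\ell a_r(x)a_r(x^{m-1}),\,h(x)\right)$, contradicting the gcd characterization of LCD in Corollary \ref{1GENLCD}. (The paper's displayed divisibility actually names $h_j(x)$ rather than $h_j^{\ast}(x)$, which is a slip since $h_j(x)\nmid h(x)$; your route is immune to this.) You instead work directly at the constituent level with (\ref{LCDQC cond}): $h_j(x)\nmid h(x)$ forces $C'_j=\{0\}$, so ${C'}_j^{\perp}$ is the full ambient space, while $h_j^{\ast}(x)\mid h(x)$ makes $C''_j\neq\{0\}$, whence $C''_j\cap {C'}_j^{\perp}=C''_j\neq\{0\}$, violating LCD. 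This buys you a cleaner argument — no computation with the sum $\sum_{r}a_r(x)a_r(x^{m-1})$ is needed, since the nonzero hull contribution is forced by pure linear algebra (the dual of the zero constituent is everything) — at the cost of invoking the cited constituent-wise LCD criterion rather than the paper's own self-contained Corollary \ref{1GENLCD}. Your handling of the field identification between $\mathbb{H}'_j$ and $\mathbb{H}''_j$, which you rightly flag as the one delicate point, is also sound.
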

\begin{proof}
Suppose $g(x)$ is not self-reciprocal. Then there exists $h_j(x)$ such that $h_j(x)\mid g(x)$ but $h_j^{\ast}(x)\nmid g(x)$ (i.e., $h_j^{\ast}(x)\mid h(x)$). Since $h_j(x)\mid g(x)$, we have that $h_j(x)\mid a_r(x)$, for all $1\leq r \leq \ell$. Therefore $h_j^{\ast}(x) \mid a_r(x^{m-1})$ for all $r$. Hence  
$$h_j(x)\mid \gcd \left(\sum_{r=1}^\ell a_r(x)a_r(x^{m-1}), h(x)\right),$$
which contradicts the assumption that $C$ is LCD.
\end{proof}

\begin{remark} \label{sr not suff}
\cite[Theorem 4.4]{EHSQC} claims that a one-generator QC code is LCD if and only if the generator polynomial of the code is self-reciprocal and the condition \eqref{condition} is satisfied. Corollary \ref{1GENLCD} already shows that this claim is not true. Moreover the converse of Proposition \ref{1GLCDSELF} does not hold, as the following example shows. 

Let $C=\langle(x^2+x,x^2+1)\rangle$ be a binary 1-generator 2-QC code of length 6 (i.e.,  $m=3$).  Note that $g(x)=x+1$, $h(x)=x^2+x+1$ and hence $C$ is of dimension 2. The generator polynomial $g(x)$ is self-reciprocal but it is easy to see that $h(C)=2$ (cf. Theorem \ref{1GHULL}). Therefore $C$ is not LCD. 
\end{remark}

Tables 1 and 2 illustrate binary and ternary maximal 1-generator 2-QC  LCD codes $\langle (a_1(x),a_2(x) \rangle$ of length $2m$. The search is done by  Magma \cite{MAGMA} for random $a_1(x), a_2(x)$ in $R_m$ satisfying the condition in Theorem \ref{1GHULL}. In the tables $d$ presents the best minimum distance we obtained from this type of QC codes and $d^{\ast}$ represents optimal minimum distance of linear codes of length $2m$ and dimension $m$ (\cite{CDTB}).
\begin{table}[h!]
	\centering
	\begin{tabular}{||c c c c c||} 
		\hline
		$m$ &\qquad $d$ &\qquad $d^{\ast}$ &\qquad $a_1(x)$ &\qquad $a_2(x)$ \\ [0.5ex] 
		\hline\hline
		3 &\qquad  2 &\qquad 3 &\qquad $x+1$ &\qquad $x^2+x+1$\\ 
		5 &\qquad 3 &\qquad 4 &\qquad $x^3+1$ &\qquad $x^2+x+1$ \\
		7 &\qquad 4 &\qquad 4 &\qquad $x^2+1$ &\qquad $x^3+x+1$ \\
		9 &\qquad 5 &\qquad  6 &\qquad $x^5+x+1$ &\qquad $x^5+x^2+x+1$\\
		11 &\qquad 6 &\qquad 7 &\qquad $x^4+1$ &\qquad $x^8+x^7+x^6+x^2+1$ \\
        13 &\qquad 7 &\qquad 7 &\qquad $x^5+1$ &\qquad $x^{11}+x^9+x^6+x^3+1$ \\
		15 &\qquad 7 &\qquad 8 &\qquad $x^6+x^2+x+1$ &\qquad $x^5+x+1$ \\
		17 &\qquad 8 &\qquad 8 &\qquad $x^6+x^4+x+1$ &\qquad $x^5+x^4+x^3+x+1$ \\[1ex] 
		\hline
	\end{tabular}
	\caption{\label{tab:table-1} Binary maximal 1-generator 2-QC LCD Codes.}
\end{table}

\begin{table}[h!]
	\centering
	\begin{tabular}{||c c c c c||} 
		\hline
		$m$ &\qquad $d$ &\qquad $d^{\ast}$ &\qquad $a_1(x)$ &\qquad $a_2(x)$ \\ [0.5ex] 
		\hline\hline
		4 &\qquad  4 &\qquad 4 &\qquad $x+1$ &\qquad $x+2$\\ 
		5 &\qquad 4 &\qquad 5 &\qquad $x+2$ &\qquad $2x+2$ \\
		7 &\qquad 6 &\qquad 6 &\qquad $2x^6+2x^4+2x^3+2x+1$ &\qquad $2x^4+x+2$ \\
		8 &\qquad 6 &\qquad  6 &\qquad $x^3+2x+2$ &\qquad $x^2+2x+2$\\
		10 &\qquad 6 &\qquad 7 &\qquad $x^3+x+1$ &\qquad $x^2+2x+1$ \\ 
		11 &\qquad 7 &\qquad 8 &\qquad $x^3+2x+2$ &\qquad $x^3+2x^2+2x+1$ \\
		13 &\qquad 7 &\qquad 8 &\qquad $x^3+x^2+x+1$ &\qquad $x^4+x^2+2x+2$ \\
		14 &\qquad 8 &\qquad 9 &\qquad $x^4+x^2+x+2$ &\qquad $x^3+2x^2+x+1$ \\[1ex] 
		\hline
	\end{tabular}
	\caption{\label{tab:table-2} Ternary maximal 1-generator 2-QC LCD Codes.}
\end{table}

Next, we study LCP of 1-generator QC codes. The following simple observation shows that LCP of 1-generator QC codes are rather constrained. 

\begin{lemma}\label{1G constr}
If $(C,D)$ is LCP of 1-generator $\ell$-QC codes of length $m\ell$, then $\ell=2$ and both $C$ and $D$ are maximal.
\end{lemma}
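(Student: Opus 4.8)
The plan is to pass to the constituent decomposition \eqref{CRTQC} and apply the LCP criterion \eqref{LCPQC cond}, reducing the statement to a dimension count on each constituent pair. The crucial input is already recorded in the proof of Theorem \ref{1GHULL}: since $C$ and $D$ are $1$-generator codes, formula \eqref{consts 1G} exhibits every one of their constituents as the span (over $\mathbb{G}_i$, $\mathbb{H}'_j$ or $\mathbb{H}''_j$) of a single evaluation vector, so each constituent is $0$- or $1$-dimensional over its field of definition.

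First I would invoke \eqref{LCPQC cond}: since $(C,D)$ is LCP, the constituent pair $(C_i,D_i)$ is an LCP of codes over $\mathbb{G}_i$ for every $i$, i.e. $C_i\oplus D_i=\mathbb{G}_i^\ell$, and similarly for the pairs $(C'_j,D'_j)$ and $(C''_j,D''_j)$. Counting dimensions over $\mathbb{G}_i$ gives $\dim C_i+\dim D_i=\ell$. By the previous paragraph each summand is at most $1$, so $\ell\le 2$; reading ``$\ell$-QC code'' as a proper quasi-cyclic code (index $\ell\ge 2$, the case $\ell=1$ being cyclic), we conclude $\ell=2$.

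With $\ell=2$ in hand, the same dimension equation forces $\dim C_i=\dim D_i=1$ for all $i$ and $\dim C'_j=\dim C''_j=\dim D'_j=\dim D''_j=1$ for all $j$; in particular none of the constituents of $C$ vanishes. By the constituent criteria recorded in the proof of Theorem \ref{1GHULL} (Cases 1--3), the conditions $C_i\ne\{0\}$, $C'_j\ne\{0\}$ and $C''_j\ne\{0\}$ for all $i,j$ mean precisely that $g_i(x)\mid h(x)$, $h_j(x)\mid h(x)$ and $h_j^{\ast}(x)\mid h(x)$ for all $i,j$. Since these are exactly the distinct irreducible factors of $x^m-1$, their product divides $h(x)$, whence $h(x)=x^m-1$ and $g(x)=1$; that is, $C$ is maximal. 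The identical argument applied to the constituents of $D$ shows $D$ is maximal.

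The argument is essentially bookkeeping once \eqref{LCPQC cond} is in place, so I do not expect a serious obstacle; the only point demanding care is the treatment of the index. The LCP condition yields only $\ell\le 2$, and the case $\ell=1$ genuinely produces LCP pairs of (cyclic) $1$-generator codes that need not be maximal, so the conclusion $\ell=2$ rests on excluding the cyclic case. I would therefore state the convention $\ell\ge 2$ explicitly to keep the lemma correct.
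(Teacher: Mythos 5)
Your proof is correct, but it takes a genuinely different route from the paper. The paper's argument is a one-line global dimension count: since $C\oplus D=\F_q^{m\ell}$, we have $\dim C+\dim D=m\ell$, and since a $1$-generator QC code has dimension at most $m$, this forces $m\ell\le 2m$, hence $\ell=2$ and $\dim C=\dim D=m$. You instead descend to the CRT constituents via \eqref{LCPQC cond}, count dimensions over each field $\mathbb{G}_i$, $\mathbb{H}'_j$, $\mathbb{H}''_j$ (each constituent of a $1$-generator code being at most $1$-dimensional), and then recover maximality by observing that nonvanishing of every constituent forces every irreducible factor of $x^m-1$ to divide the parity check polynomial, so $h(x)=x^m-1$. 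Both arguments are sound; the paper's is shorter and needs no structure theory beyond the dimension bound $\dim C\le m$, while yours gives finer local information (every constituent pair is forced to be a pair of complementary lines in $\mathbb{G}_i^2$, etc.), which is exactly the picture used later in Proposition \ref{1GLCP}, so your proof doubles as a lead-in to that result. Your caveat about $\ell=1$ is well taken and applies equally to the paper's own proof: the global count only yields $\ell\le 2$, and LCP pairs of cyclic codes $\langle g_1(x)\rangle,\langle g_2(x)\rangle$ with $g_1(x)g_2(x)=x^m-1$ and $\gcd(g_1,g_2)=1$ show that the case $\ell=1$ is not vacuous, so the lemma implicitly assumes the index is at least $2$; making that convention explicit, as you do, is a genuine (if minor) improvement.
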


\begin{proof}
Since the pair is linear complementary, we have $\dim(C)+\dim(D)=m\ell$. However, the maximal dimension for 1-generator QC codes is $m$. Therefore, $\ell=2$ and the dimension of each code is $m$. 
\end{proof}

The following result characterizes LCP of maximal 1-generator 2-QC codes. We note that this result was proved for double circulant codes in \cite[Proposition 3.2]{LCP}. 

\begin{proposition}\label{1GLCP}
Let $C=\langle(a_1(x),a_2(x))\rangle$ and $D=\langle(b_1(x), b_2(x))\rangle$ be maximal 1-generator 2-QC codes. Then $(C,D)$ is LCP of codes if and only if 
$$\gcd(a_1(x)b_2(x)-a_2(x)b_1(x), x^m-1)=1.$$
\end{proposition}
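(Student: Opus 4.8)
The plan is to reduce the complementarity condition to the constituent level via (\ref{LCPQC cond}) and then reinterpret each constituent pair as a transversality (linear independence) condition, which collapses into a single $2\times 2$ determinant. First I would use maximality: since $C$ and $D$ are maximal, their parity check polynomials are both $x^m-1$, so by the description in (\ref{consts 1G}) every constituent of $C$ and of $D$ is one-dimensional over its field of definition. Indeed, $C_i\neq\{0\}$ exactly when $g_i(x)\mid h(x)$ (the nonvanishing criterion from the proof of Theorem \ref{1GHULL}), and likewise for the $h_j,h_j^\ast$ constituents; maximality makes $h(x)=x^m-1$, so all of $g_i(x),h_j(x),h_j^\ast(x)$ divide it and every constituent is a line. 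This is the only place maximality is needed, but it is essential: it forces the dimension count $\dim C_i+\dim D_i=2=\dim\mathbb{G}_i^2$ (and similarly for the other constituents), so that being LCP is equivalent to the two lines being distinct.

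Next I would set $\Delta(x):=a_1(x)b_2(x)-a_2(x)b_1(x)$ and treat one constituent at a time. By (\ref{consts 1G}) the constituent $C_i$ is the $\mathbb{G}_i$-line spanned by $(a_1(\xi^{u_i}),a_2(\xi^{u_i}))$, and $D_i$ is spanned by $(b_1(\xi^{u_i}),b_2(\xi^{u_i}))$; both sit inside the two-dimensional space $\mathbb{G}_i^2$. Hence $(C_i,D_i)$ is LCP, i.e. $C_i\oplus D_i=\mathbb{G}_i^2$, if and only if the two spanning vectors are $\mathbb{G}_i$-linearly independent, which happens precisely when
$$a_1(\xi^{u_i})b_2(\xi^{u_i})-a_2(\xi^{u_i})b_1(\xi^{u_i})=\Delta(\xi^{u_i})\neq 0.$$
Since $\xi^{u_i}$ is a root of the irreducible factor $g_i(x)$ of $x^m-1$, this is equivalent to $g_i(x)\nmid\Delta(x)$. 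Running the identical argument at $\xi^{v_j}$ and $\xi^{-v_j}$ shows that $(C'_j,D'_j)$ is LCP iff $h_j(x)\nmid\Delta(x)$, and $(C''_j,D''_j)$ is LCP iff $h_j^\ast(x)\nmid\Delta(x)$.

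Finally I would assemble the pieces through (\ref{LCPQC cond}): $(C,D)$ is LCP if and only if all three families of constituent pairs are LCP, i.e. if and only if none of the irreducible polynomials $g_i(x),h_j(x),h_j^\ast(x)$ divides $\Delta(x)$. Because these are distinct and constitute the full irreducible factorization of $x^m-1$ (cf. (\ref{factors})), "no irreducible factor of $x^m-1$ divides $\Delta(x)$" is exactly the assertion $\gcd(\Delta(x),x^m-1)=1$, which is the claimed criterion.

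I do not expect a serious obstacle; the core is a one-line determinant computation. The point that most requires care is that the LCP constituent condition (\ref{LCPQC cond}) pairs $C'_j$ with $D'_j$ (matching primes), so each constituent pair lives in a single ambient space $\mathbb{H'}_j^2$ (resp. $\mathbb{H''}_j^2$) and the line-transversality reformulation applies uniformly across all three cases. This is precisely what makes the answer collapse to one clean determinant polynomial, in contrast to the hull computation of Theorem \ref{1GHULL}, where $C'_j$ is paired with the dual ${C''_j}^\perp$ and the two reciprocal constituents must be tracked separately.
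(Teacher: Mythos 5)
Your proof is correct and takes essentially the same route as the paper's: both reduce to the constituent level via (\ref{LCPQC cond}) and observe that each constituent pair is LCP exactly when the stacked $2\times 2$ matrix of spanning vectors is nonsingular, i.e.\ when the determinant $a_1b_2-a_2b_1$ does not vanish at the corresponding root of $x^m-1$, which is the gcd condition. The only cosmetic difference is that you make explicit why maximality forces every constituent to be a nonzero line, a point the paper leaves implicit in its full-rank formulation.
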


\begin{proof}
Constituents of both codes, which are described in (\ref{consts 1G}), lie in 2-dimensional spaces over certain extensions of $\F_q$. Therefore by (\ref{LCPQC cond}), $(C,D)$ is LCP of codes if and only if  the following $2\times 2$ matrices are of full rank for all $i,j$: 
$$\left(\begin{array}{cc} a_1(\xi^{u_i}) & a_2(\xi^{u_i})\\ b_1(\xi^{u_i}) & b_2(\xi^{u_i})\end{array}\right) \ \ \left(\begin{array}{cc} a_1(\xi^{v_j}) & a_2(\xi^{v_j})\\ b_1(\xi^{v_j}) & b_2(\xi^{v_j})\end{array}\right) \ \ \left(\begin{array}{cc} a_1(\xi^{-v_j}) & a_2(\xi^{-v_j})\\ b_1(\xi^{-v_j}) & b_2(\xi^{-v_j})\end{array}\right)$$
This is true if and only if no irreducible factor of $x^m-1$ divides the polynomial $a_1(x)b_2(x)-a_2(x)b_1(x)$.
\end{proof}

A 1-generator QC code of the form $C=\langle (1,a(x))\rangle \subset R_m^2$ is called a double circulant (DC) code. A DC code is a maximal 1-generator QC code. We provide ternary LCP of DC codes with good security parameters in Table 3. Here, $d$ represents the security parameter of the pair and $d^{\ast}$ is the best minimum distance for ternary $[2m,m]$ linear codes (\cite{CDTB}).

\begin{table}[h!]\label{TTLCPDC}
	\centering
	\small
	\begin{tabular}{||c c c c c||} 
		\hline
		$m$ & \qquad $d$ &\qquad $d^{\ast}$ &\qquad $a(x)$ & \qquad $b(x)=-a(x^{m-1})$\\ [0.5ex] 
		\hline\hline
		4 &\qquad  4 &\qquad 4 &\qquad $x^3+2x+1$ &\qquad $x^3+2x+2$ \\ 
		5 &\qquad 4 &\qquad 5 &\qquad $x^4+x+2$  &\qquad $x^4+2x+1$\\
		7 &\qquad 5 &\qquad 6 &\qquad $x^6+x^3+x+1$ &\qquad $2x^6+2x^4+2x+2$ \\
		8 &\qquad 6 &\qquad  6 &\qquad $x^7+x^3+x^2+2x+2$ &\qquad $x^7+2x^6+2x^5+2x+1$ \\
		10 &\qquad 7 &\qquad 7 &\qquad $x^9+x^5+x^4+x^2+x+2$ &\qquad $2x^9+2x^8+2x^6+2x^5+2x+1$ \\ 
		11 &\qquad 7 &\qquad 8 &\qquad $2x^{10}+2x^9+2x^8+x^5+x^2+2$ &\qquad $2x^9+2x^6+x^3+x^2+x+1$ \\[1ex] 
		\hline
	\end{tabular}
	\caption{\label{tab:table-8} Ternary LCP of DC Codes.}
\end{table}

We conclude this section with further observations on DC codes. A formula for the hull dimension of DC codes follows from Theorem \ref{1GHULL}, which is 
$$h(C)=\deg \gcd(1+a(x)a(x^{m-1}), x^m-1).$$
The next result characterizes the existence of DC codes of hull dimension 1, which is of interest for various applications (cf. Introduction), and it also provides a necessary condition for the existence of DC codes of odd hull dimension. The proof explicitly describes the DC codes of hull dimension 1 as well.

\begin{theorem}\label{1dIMHULLDC}
i. There exists a DC code of hull dimension one over $\F_q$ if and only if $q\equiv 1$ (mod 4) or $q$ is even.

ii. If there exists a DC code with odd hull dimension over $\F_q$, then $q\equiv 1$ (mod 4) or $q$ is even.
\end{theorem}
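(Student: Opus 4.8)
The plan is to reduce both statements to a parity analysis of the gcd appearing in the hull formula for DC codes. Writing $P(x) = 1 + a(x)a(x^{m-1})$, the formula recorded just before the theorem gives $h(C) = \deg\gcd(P(x), x^m-1)$, so $h(C)$ counts, weighted by the degrees of the irreducible factors of $x^m-1$, the $m$-th roots of unity at which $P$ vanishes. Since $x^{m-1} = x^{-1}$ in $R_m$, one computes $P(\xi^{-k}) = 1 + a(\xi^{-k})a(\xi^{k}) = P(\xi^k)$ for every $k$, so the set of roots of $P$ among the $m$-th roots of unity is invariant under inversion. This symmetry is the engine of the whole argument.

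Next I would feed this symmetry into the factorization $x^m-1 = \prod_i g_i(x)\prod_j h_j(x)h_j^{\ast}(x)$ and read off the parity of each contribution to $h(C)$, exactly along the three cases of Theorem \ref{1GHULL}. For a reciprocal pair, the roots of $h_j^{\ast}$ are the inverses of the roots of $h_j$; by the inversion symmetry of $P$ we get $h_j \mid P \iff h_j^{\ast}\mid P$, so the pair contributes either $0$ or $2\deg h_j$, an even number. For a self-reciprocal factor $g_i$ the contribution is $0$ or $\deg g_i$, and $\deg g_i$ is even except for the two degree-one factors $x-1$ and $x+1$. Hence every contribution is even except possibly those of $x-1$ and $x+1$, each of which contributes $0$ or $1$. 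This yields the key congruence
$$h(C) \equiv [\,P(1)=0\,] + [\,P(-1)=0\,] \pmod 2,$$
the second bracket being present only when $-1$ is an $m$-th root of unity (that is, $m$ even and $q$ odd; in characteristic $2$ one has $x+1=x-1$ and only the first bracket survives).

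From here both parts are immediate. Since $P(1)=1+a(1)^2$ and $P(-1)=1+a(-1)^2$, a bracket can equal $1$ only if $-1$ is a square in $\F_q$, i.e. only if $q\equiv 1 \pmod 4$ or $q$ is even. When $q\equiv 3\pmod 4$, therefore, both brackets vanish and $h(C)$ is forced to be even; this proves part (ii), and the ``only if'' direction of part (i) is the special case $h(C)=1$ (which is odd). For the ``if'' direction of part (i), fix $\beta\in\F_q$ with $\beta^2=-1$ and take $a(x)=\beta\,m^{-1}(1+x+\cdots+x^{m-1})$ for any $m$ coprime to $q$; since $m^{-1}(1+x+\cdots+x^{m-1})$ is the idempotent vanishing at every $\xi^k$ with $k\neq 0$ and equal to $1$ at $x=1$, one gets $P(\xi^0)=1+\beta^2=0$ and $P(\xi^k)=1\neq 0$ for $k\neq 0$, so $\gcd(P,x^m-1)=x-1$ and $h(C)=1$. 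This also furnishes the explicit family of hull-dimension-one DC codes promised in the statement.

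The main obstacle is the bookkeeping in the middle paragraph: one must correctly match the three constituent types of Theorem \ref{1GHULL} against the factorization of $x^m-1$ and verify that the inversion symmetry of $P$ genuinely pairs up every contribution except those of the degree-one self-reciprocal factors $x\pm 1$. Once this parity congruence is in place, the criterion that $-1$ be a square and the idempotent construction are entirely routine.
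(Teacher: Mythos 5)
Your proof is correct, and its mathematical core is the same as the paper's: both arguments rest on the observation that every irreducible factor of $x^m-1$ other than $x\pm 1$ contributes an even amount to the hull dimension, which forces any odd hull dimension to come from a linear factor, hence forces $a(\pm 1)^2=-1$, hence forces $-1$ to be a square in $\F_q$. The difference is in execution. The paper proves the evenness at the constituent level: it writes down the CRT constituents $C'_j$, $C''_j$ and their duals explicitly and checks that $C'_j\cap {C''}_j^{\perp}$ and $C''_j\cap {C'}_j^{\perp}$ vanish or not simultaneously, both being governed by the single condition $a(\xi^{v_j})a(\xi^{-v_j})=-1$. You instead stay entirely at the polynomial level, deducing from the gcd formula of Theorem \ref{1GHULL} and the inversion symmetry $P(\xi^{k})=P(\xi^{-k})$ that reciprocal pairs $h_j, h_j^{\ast}$ divide $P$ together or not at all; this packages part (ii) and the ``only if'' of part (i), which the paper treats by two separate (though parallel) constituent computations, into the single congruence $h(C)\equiv[\,P(1)=0\,]+[\,P(-1)=0\,] \pmod 2$ --- a cleaner organization that reuses Theorem \ref{1GHULL} as a black box. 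Your construction for the ``if'' direction also differs: the paper gives two case-specific generators, $a(x)=x-(\alpha+1)$ for $q\equiv 1 \pmod 4$ and $a(x)=u(x)+(\beta+1)$ with $u(x)=(x^m-1)/(x-1)$ in characteristic $2$, whereas your single idempotent generator $a(x)=\beta\, m^{-1}(1+x+\cdots+x^{m-1})$ handles both cases uniformly (and is close in spirit to the paper's characteristic-$2$ choice, which is also built from $(x^m-1)/(x-1)$). Both verifications that $\gcd(P(x),x^m-1)=x-1$ are routine and complete; your route buys uniformity and brevity, the paper's buys very low-degree generators for the $q\equiv 1\pmod 4$ case.
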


\begin{proof}
i. Let $C=\langle (1,a(x))\rangle \subset R_m^2$. The only linear factors of $x^m-1$ are $x-1$ and $x+1$, where the latter can only occur in the case of $\F_q$ with odd characteristics and $m$ is even. Therefore hull dimension 1 is only possible if there is a contribution to the hull only from one of these linear irreducible factors (cf. (\ref{QC hull dim})). Let us denote the constituent corresponding to $x-1$ by $C_1\subset \F_q^2$, which is a 1-dimensional space spanned by $(1,a(1))$. It is easy to observe that $C_1^\bot =Span_{\F_q}\{(-a(1),1)\}$. Therefore $h(C_1)=1$ if and only if $a(1)^2=-1$. This implies that $q\equiv 1$ (mod 4) or $q$ is even. 

For the converse, let us construct $a(x)$ so that the resulting DC code has a 1-dimensional hull. In the case  $q\equiv 1$ (mod 4), let $\alpha \in \mathbb{F}_q\setminus \{0\}$ such that $\alpha^2=-1$ and set $a(x)=x-(\alpha+1)$. Then,
\begin{eqnarray*}
1+a(x)a(x^{m-1})&=&1+ (x-(\alpha+1))(x^{m-1}-(\alpha+1))\\
&=&(\alpha +1)(2-x-x^{m-1}).
\end{eqnarray*}
For an $m^{th}$ root of unity $\zeta$ to be a root of this polynomial, we have  
$$\zeta^{-1}+\zeta-2=0 \; \iff \; \zeta^2-2\zeta+1=(\zeta-1)^2=0 \; \iff \; \zeta=1.$$
Hence $\gcd \left( 1+a(x)a(x^{m-1}), x^m-1 \right)=x-1$ and $h(C)=1$ by Theorem \ref{1GHULL}. For $q$ even, let $u(x)=(x^m-1)/(x-1)$ and let $\beta :=u(1)\ne 0$. If we set $a(x)=u(x)+(\beta+1)$ and $v(x)=1+a(x)a(x^{m-1})$, we have
$$v(1)=1+(u(1)+\beta+1)(u(1)+\beta+1)=1+1=0.$$ 
On the other hand, if $\zeta\ne 1$ is another $m^{th}$ root of unity, we have
$$v(\zeta)=1+(u(\zeta)+\beta+1)(u(\zeta^{-1})+\beta+1).$$
Since $u(\zeta)=u(\zeta^{-1})=0$, we obtain
$$v(\zeta)=1+(\beta+1)^2=1+\beta^2+1=\beta^2\not=0.$$
Therefore, $\gcd(1+a(x)a(x^{m-1}), x^m-1)=x-1$ and $h(C)=1$ again. 

ii. All self-reciprocal irreducible factors $g_i(x)$ of $x^m-1$ (cf. (\ref{factors})) other than $(x\mp 1)$ are of even degree. Therefore contribution to the hull dimension from the constituents corresponding to such $g_i(x)$ is even (cf. (\ref{QC hull dim})). For a pair of reciprocal irreducible factors $h_j(x),h_j^{\ast}(x)$, the corresponding constituents of $C$ are
$$C'_j=Span_{\mathbb{H'}}\{(1,a(\xi^{v_j}))\} \ \ \mbox{and} \ \ C''_j=Span_{\mathbb{H''}}\{(1,a(\xi^{-v_j}))\}.$$
Duals of these constituents are easily seen to be
$${C'}_j^{\perp}=Span_{\mathbb{H'}}\{(-a(\xi^{v_j}),1)\} \ \ \mbox{and} \ \ {C''}_j^{\perp}=Span_{\mathbb{H''}}\{(-a(\xi^{-v_j}),1)\}.$$
Therefore, $C'_j \cap 	{C''}_j^{\perp}\not=\{0\}$ if and only if $a(\xi^{v_j})a(\xi^{-v_j})=-1$. On the other hand, ${C'}_j^{\perp} \cap 	{C''}_j\not=\{0\}$ if and only if $a(\xi^{v_j})a(\xi^{-v_j})=-1$. Hence, these two intersections are either simultaneously 0 or they are both of dimension 1. Hence, the contribution of such a pair of constituents to $h(C)$ is also even (cf. (\ref{QC hull dim})). Therefore an odd hull dimension can only be attained from a constituent corresponding to a linear factor of $x^m-1$, which implies the conditions on $q$ as in part i. 
\end{proof}

Tables 4 and 5 present the best possible minimum distances for binary and quinary DC codes with hull dimension 1. Here $d^{*}$ is the best known minimum distance for linear codes of length $2m$ and dimension $m$ (\cite{CDTB}), whereas $d$ is the best possible minimum distance which can be obtained from a DC code $C=\langle (1,a(x))\rangle$ of hull dimension 1. 
\begin{table}[h!]\label{BIN1HULLDC}
	\centering
	\begin{tabular}{||c c c c||} 
		\hline
		$m$ & \qquad $d$ &\qquad $d^{\ast}$ &\qquad $a(x)$ \\ [0.5ex] 
		\hline
		3 &\qquad  2 &\qquad 3 &\qquad $x^2+x+1$ \\ 
		5 &\qquad 4 &\qquad 4 &\qquad $x^4+x^2+1$ \\
		7 &\qquad 4 &\qquad 4 &\qquad $x^6+x^3+1$ \\
		9 &\qquad 6 &\qquad  6 &\qquad $x^8+x^7+x^5+x^3+x^2$ \\
		11 &\qquad 6 &\qquad 7 &\qquad $x^{10}+x^8+x^5+x^3+1$ \\ 
		13 &\qquad 6 &\qquad 7 &\qquad $x^{12}+x^4+x^3+x+1$ \\
		15 &\qquad 8 &\qquad 8 &\qquad $x^{14}+\cdots+x^7+x^4+x^3+x$ \\
		17 &\qquad 8 &\qquad 8 &\qquad $x^{16}+\cdots+x^{11}+x^5+x^3+x+1$ \\[1ex] 
		\hline
	\end{tabular}
	\caption{\label{tab:table-7} Binary DC Codes with hull dimension 1}
\end{table}

\begin{table}[h!]\label{QUI1HULLDC}
	\centering
	\begin{tabular}{||c c c c||} 
		\hline
		$m$ & \qquad $d$ &\qquad $d^{\ast}$ &\qquad $a(x)$ \\ [0.5ex] 
		\hline\hline
		3 &\qquad  3 &\qquad 4 &\qquad $3x^2+3x+1$ \\ 
		4 &\qquad 4 &\qquad 4 &\qquad $x^3+x^2+3x+3$ \\
		6 &\qquad 6 &\qquad 6 &\qquad $x^5+x^3+2x^2+2x+1$ \\
		7 &\qquad 6 &\qquad  6 &\qquad $x^4+x^3+x^2+2x+3$ \\
		8 &\qquad 7  &\qquad 7 &\qquad $x^5+2x^4+4x^3+2x^2+2x+2$ \\ 
		9 &\qquad 7 &\qquad 7 &\qquad $x^5+x^4+x^3+2x^2+x+2$ \\
		11 &\qquad 8 &\qquad 8 &\qquad $x^6+x^5+x^4+2x^3+x^2+4x+2$ \\
		12 &\qquad 8 &\qquad 8 &\qquad $x^7+x^6+4x^5+2x^4+4x^3+4x^2+3x+4$ \\[1ex] 
		\hline
	\end{tabular}
	\caption{\label{tab:table-7} Quinary DC Codes with hull dimension 1}
\end{table}
\newpage

The next example shows that the converse of part ii in Theorem \ref{1dIMHULLDC} is not always true. 

\begin{example}
Let $q=4$ and $m=9$. Then
$$x^9-1=(x+1)(x+\alpha)(x+\alpha^2)(x^3+\alpha)(x^3+\alpha^2),$$
where $\alpha$ is a primitive element of $\mathbb{F}_4$. Note that $(x+\alpha),(x+\alpha^2)$ and $(x^3+\alpha),(x^3+\alpha^2)$ are reciprocal pairs. For the following choices of $a(x)$, we have even hull dimension for the DC code $\langle (1,a(x)) \rangle \subset (\F_4[x]/\langle x^9-1 \rangle)^2$:

\begin{itemize}
\item $a(x)=\alpha^2x^8+\alpha^2x^7+\alpha^2x^6+x^3+x+1$: $[18,9,7]_4$ DC code with hull dimension 2.
\item $a(x)=x^7+x^6+x^5+\alpha x^4+\alpha^2x^3+\alpha^2x^2+\alpha x+\alpha^2$: $[18,9,7]_4$ DC code with hull dimension 6.
\end{itemize}

Let $q=5$ and $m=8$. Then
$$x^8-1=(x+1)(x+2)(x+3)(x+4)(x^2+2)(x^2+3).$$
Note that $(x+1),(x+4)$ are self-reciprocal and $(x+2),(x+3)$ and $(x^2+2),(x^2+3)$ are reciprocal to each other.  For the following choices of $a(x)$, we have even hull dimension for the DC code $\langle (1,a(x)) \rangle \subset (\F_5[x]/\langle x^8-1 \rangle)^2$:
	\begin{itemize}
		\item $a(x)=4x^7+4x^6+x^3+4x^2+3x+2$: $[16,8,7]_5$ DC code with hull dimension 2.
		\item $a(x)=4x^7+4x^6+4x^5+2x^3+4$: $[16,8,6]_5$ DC code with hull dimension 4.
	\end{itemize}
\end{example}

\section{Four-Circulant Codes}\label{4-circ sec}
We now investigate a class of 2-generator 4-QC codes. The code
$$C=\langle (1,0,a_1(x), a_2(x)),(0,1,-a_2(x^{m-1}),a_1(x^{m-1})) \rangle \subset R_m^4$$
is called a four-circulant (FC) code. By \cite[Equation 2.3]{LCDQC}, the following matrices generate the 2-dimensional constituents $C_i, C'_j,C^{''}_j$ of $C$ ($1\leq i\leq s$ and $1\leq j \leq t$):
\[\begin{array}{cc}
G_i=\begin{pmatrix}
	1 & 0 & a_1(\xi^{u_i}) & a_2(\xi^{u_i})\\
	0 & 1 & -a_2(\xi^{-u_i}) & a_1(\xi^{-u_i}) 
\end{pmatrix}  & 
G'_j=\begin{pmatrix}
	1 & 0 & a_1(\xi^{v_j}) & a_2(\xi^{v_j})\\
	0 & 1 & -a_2(\xi^{-v_j}) & a_1(\xi^{-v_j}) 
\end{pmatrix}
\end{array}\]
\begin{equation}\label{consts-2}
G{''}_j=\begin{pmatrix}
	1 & 0 & a_1(\xi^{-v_j}) & a_2(\xi^{-v_j})\\
	0 & 1 & -a_2(\xi^{v_j}) & a_1(\xi^{v_j}) 
\end{pmatrix}
\end{equation}

We first describe the hull dimension of FC codes. 

\begin{theorem}\label{4CHULL}
Let $C=\langle (1,0,a_1(x), a_2(x)),(0,1,-a_2(x^{m-1}),a_1(x^{m-1})) \rangle \subset R_m^4$ be a FC code. Then the hull dimension of $C$ is $h(C)=2\deg u(x)$, where 
$$u(x)=\gcd \bigl(1+a_1(x)a_1(x^{m-1})+a_2(x)a_2(x^{m-1}), x^m-1\bigr).$$
In particular, an FC code of odd hull dimension does not exist over any finite field.
\end{theorem}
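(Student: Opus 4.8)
The structure here mirrors Theorem~\ref{1GHULL} exactly, so the plan is to compute the hull dimension constituent-by-constituent via the CRT formula \eqref{QC hull dim} and show that every contribution is even. The key new feature is that each constituent is now \emph{2-dimensional} (spanned by the two rows of $G_i$, $G'_j$, $G''_j$ in \eqref{consts-2}), rather than $0$- or $1$-dimensional as in the one-generator case. First I would handle the self-reciprocal constituents $C_i$, whose contribution to $h(C)$ is $\deg g_i(x)\cdot h_h(C_i)$. I would observe that the two generating rows of $G_i$ are of the form $(1,0,\alpha,\beta)$ and $(0,1,-\bar\beta,\bar\alpha)$, where a bar denotes the Frobenius/conjugation sending $\xi^{u_i}\mapsto\xi^{-u_i}$; the defining structure of the FC generators is precisely engineered so that $G_i\bar G_i^{\,T}$ is a scalar multiple of the identity, namely $\bigl(1+a_1(\xi^{u_i})a_1(\xi^{-u_i})+a_2(\xi^{u_i})a_2(\xi^{-u_i})\bigr)I_2$. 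Hence $h_h(C_i)$ is either $0$ or $2$, according to whether this scalar is nonzero or zero, and that scalar vanishes exactly when $g_i(x)\mid 1+a_1(x)a_1(x^{m-1})+a_2(x)a_2(x^{m-1})$. So each self-reciprocal factor contributes $0$ or $2\deg g_i(x)$.

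Next I would treat the reciprocal pairs $h_j(x),h_j^{\ast}(x)$, where the contribution is $\deg h_j(x)\bigl[\dim(C'_j\cap {C''}_j^{\perp})+\dim(C''_j\cap {C'}_j^{\perp})\bigr]$. Here the relevant bilinear form pairs the rows of $G'_j$ against the rows of $G''_j$ (ordinary transpose, no conjugation, since $h_j$ and $h_j^\ast$ are distinct factors). The same computation gives that $G'_j (G''_j)^T$ equals $\bigl(1+a_1(\xi^{v_j})a_1(\xi^{-v_j})+a_2(\xi^{v_j})a_2(\xi^{-v_j})\bigr)I_2$, so $\dim(C'_j\cap {C''}_j^{\perp})$ is $0$ or $2$ depending on whether this scalar vanishes, i.e.\ whether $h_j(x)$ divides $1+a_1(x)a_1(x^{m-1})+a_2(x)a_2(x^{m-1})$. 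By symmetry (applying the reciprocal substitution $x\mapsto x^{m-1}$, which interchanges $\xi^{v_j}$ and $\xi^{-v_j}$ and fixes the polynomial $1+a_1(x)a_1(x^{m-1})+a_2(x)a_2(x^{m-1})$ up to the relabeling that the same gcd detects), the condition for $h_j^{\ast}(x)$ is identical. Thus each reciprocal pair contributes $0$ or $2\deg h_j(x)=2\deg h_j^\ast(x)$, again an even multiple.

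Assembling these via \eqref{QC hull dim}, every term of $h(C)$ carries a factor of $2$, and collecting the factors that divide the polynomial gives $h(C)=2\deg\gcd\bigl(1+a_1(x)a_1(x^{m-1})+a_2(x)a_2(x^{m-1}),\,x^m-1\bigr)$, noting that the constituent is nonzero for every factor of $x^m-1$ because the FC code is maximal (the rows $(1,0,\dots)$ and $(0,1,\dots)$ force $G_i,G'_j,G''_j$ to have rank $2$ regardless). The ``in particular'' statement is then immediate: $h(C)$ is twice an integer, hence even, so odd hull dimension is impossible over any $\F_q$.

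I expect the main obstacle to be the careful verification that each $2\times 2$ matrix product $G_i\bar G_i^{\,T}$ (and $G'_j(G''_j)^T$) is genuinely scalar rather than merely having the right determinant. The off-diagonal entries are $\bigl(1,0,\alpha,\beta\bigr)\cdot\bigl(0,1,-\bar\beta,\bar\alpha\bigr)^{(\ast)} = -\alpha\bar\beta+\beta\bar\alpha$-type expressions, and one must check they cancel to zero while the diagonal entries coincide; this is exactly the payoff of the skew structure $(-a_2,a_1)$ built into the second generator of the FC code. Once the scalar form is confirmed, everything else is bookkeeping with \eqref{QC hull dim} and the observation that degrees of reciprocal factors match.
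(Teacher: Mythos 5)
Your proposal is correct and follows essentially the same route as the paper's proof: setting $A(x)=1+a_1(x)a_1(x^{m-1})+a_2(x)a_2(x^{m-1})$, the paper likewise computes $G_i\bar G_i^{T}=A(\xi^{u_i})I_2$ and $G'_j(G''_j)^{T}=A(\xi^{v_j})I_2$, deduces that each rank is $0$ or $2$, uses the self-reciprocality of $A(x)$ to tie the condition for $h_j(x)$ to that for $h_j^{\ast}(x)$, and assembles everything via \eqref{QC hull dim}. Two slips in your write-up, neither fatal to the argument: after applying the conjugation to the second row, the off-diagonal entry of $G_i\bar G_i^{T}$ is $\alpha(-\beta)+\beta\alpha=0$ rather than $-\alpha\bar\beta+\beta\bar\alpha$ (which need not vanish); and a reciprocal pair with $h_j(x)\mid A(x)$ contributes $\deg h_j(x)\,[2+2]=4\deg h_j(x)$ to \eqref{QC hull dim} --- i.e. $2(\deg h_j(x)+\deg h_j^{\ast}(x))$, which is what your (correct) final formula requires --- not $2\deg h_j(x)$ as one of your sentences states.
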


\begin{proof}
By (\ref{herm hull dim}) and \cite[Theorem 2.1]{LINT}, which generalizes (\ref{hull dim}), dimensions that contribute to the hull dimension $h(C)$ of $C$ are the following (cf. (\ref{QC hull dim})):
\begin{equation} \label{4C hull contr}
h_h(C_i)=2-\hbox{rank}(G_i\bar{G_i}^T),\ \
\dim(C'_j\cap C''^{\bot}_j)=2-\hbox{rank}(G'_jG''^T_j)=2-\hbox{rank}(G''_jG'^T_j)= \dim(C''_j\cap C'^{\bot}_j). 
\end{equation}
Let 
$$A(x):=1+a_1(x)a_1(x^{m-1})+a_2(x)a_2(x^{m-1}).$$
Note that
$$
G_i\bar{G}_i^T=
\left(\begin{array}{cccc}
1 & 0 & a_1(\xi^{u_i}) & a_2(\xi^{u_i}) \\
0 & 1 & -a_2(\xi^{-u_i}) & a_1(\xi^{-u_i}) 
\end{array}\right) \left(\begin{array}{cc}
1 & 0 \\
0 & 1 \\
a_1(\xi^{-u_i}) & -a_2(\xi^{u_i})\\
a_2(\xi^{-u_i}) & a_1(\xi^{u_i})
\end{array}\right)=\left(\begin{array}{cc} A(\xi^{u_i}) & 0 \\ 0 & A(\xi^{u_i}) \end{array} \right).
$$
On the other hand, 
$$
G'_j G{''}_j^T=
\left(\begin{array}{cccc}
1 & 0 & a_1(\xi^{v_j}) & a_2(\xi^{v_j})\\
0 & 1 & -a_2(\xi^{-v_j}) & a_1(\xi^{-v_j})  
\end{array}\right) \left(\begin{array}{cc}
1 & 0 \\
0 & 1 \\
a_1(\xi^{-v_j}) & -a_2(\xi^{v_j})\\
a_2(\xi^{-v_j}) & a_1(\xi^{v_j})
\end{array}\right)=\left(\begin{array}{cc} A(\xi^{v_j}) & 0 \\ 0 & A(\xi^{v_j}) \end{array} \right).
$$
Hence we have (cf. (\ref{factors})),
\[\begin{array}{ll} \hbox{rank}(G_i\bar{G}_i^T)=\left\{ \begin{array}{ll} 0 & \mbox{if $g_i(x)|A(x)$}\\ 2 & \mbox{otherwise}\end{array} \right. , & \hbox{rank}(G'_j G{''}_j^T)=\hbox{rank}(G''_jG'^T_j)=\left\{\begin{array}{ll} 0 & \mbox{if $h_j(x)|A(x)$}\\ 2 & \mbox{otherwise}\end{array}  \right.  . \end{array}\]
Irreducible factors $h_j(x)$ and $h_j^{\ast}(x)$ of $x^m-1$ both divide $A(x)$ or neither does, since $A(x)$ is self-reciprocal. Combining these with (\ref{QC hull dim}) and (\ref{4C hull contr}), the result follows. 
\end{proof}

An immediate consequence of Theorem \ref{4CHULL} is the characterization of LCD FC codes. Although this analysis is carried out in \cite[Section 3]{4CLCD} and used, it is not explicitly stated. So, we state this result.

\begin{corollary}
Let $C=\langle (1,0,a_1(x), a_2(x)),(0,1,-a_2(x^{m-1}),a_1(x^{m-1})) \rangle \subset R_m^4$ be a FC code. Then $C$ is LCD if and only if 
$$\gcd(1+a_1(x)a_1(x^{m-1})+a_2(x)a_2(x^{m-1}), x^m-1)=1.$$
\end{corollary}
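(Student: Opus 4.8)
The plan is to deduce this directly from Theorem \ref{4CHULL}, as the statement is simply the specialization of the hull-dimension formula to the case of a trivial hull. First I would recall that, by definition, $C$ is LCD exactly when $\hbox{Hull}(C)=C\cap C^\perp=\{0\}$, that is, when $h(C)=0$ (cf. Section \ref{prelim}).

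Next I would invoke Theorem \ref{4CHULL}, which gives $h(C)=2\deg u(x)$ with
$$u(x)=\gcd\bigl(1+a_1(x)a_1(x^{m-1})+a_2(x)a_2(x^{m-1}),\,x^m-1\bigr).$$
Hence $h(C)=0$ if and only if $\deg u(x)=0$. The final step is to observe that $u(x)$ is a greatest common divisor taken with the monic polynomial $x^m-1$, so $u(x)$ is itself monic; a monic polynomial has degree $0$ precisely when it equals $1$. Therefore $\deg u(x)=0$ is equivalent to the displayed condition $\gcd(1+a_1(x)a_1(x^{m-1})+a_2(x)a_2(x^{m-1}),x^m-1)=1$, and chaining the two equivalences yields the corollary.

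There is no genuine obstacle here: the entire content is already packaged in Theorem \ref{4CHULL}, and the only point requiring any care is the elementary passage from ``$u$ has degree zero'' to ``$u=1$'', which rests on the normalization that the gcd is taken to be monic. Accordingly I would state the argument in one or two lines, citing Theorem \ref{4CHULL} for the formula $h(C)=2\deg u(x)$ and concluding that $C$ is LCD if and only if $u(x)=1$.
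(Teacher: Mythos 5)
Your proof is correct and is exactly the argument the paper has in mind: the paper states this corollary as an immediate consequence of Theorem \ref{4CHULL} (LCD means $h(C)=0$, and $h(C)=2\deg u(x)=0$ forces the monic gcd $u(x)$ to equal $1$). Your only addition is spelling out the monicity point, which is a fine but standard detail.
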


Tables 6 and 7 present the best possible distances of binary and ternary LCD FC codes. Meanings of $d$ and $d^{\ast}$ are analogous to previous tables. 

\begin{table}[h!]\label{BINFCLCD}
	\centering
	\begin{tabular}{||c c c c c||} 
		\hline
		$m$ & \qquad $d$ &\qquad $d^{\ast}$ &\qquad $a_1(x)$ &\qquad $a_2(x)$  \\ [0.5ex] 
		\hline\hline
		3 &\qquad  2 &\qquad 4 &\qquad $x+1$ &\qquad $x^2+x$ \\ 
		5 &\qquad 5 &\qquad 6 &\qquad $x^2$ &\qquad $x^2+x+1$ \\
		7 &\qquad 6 &\qquad 8 &\qquad $x^6+x^5+x^4+x^3$ &\qquad $x+1$ \\
		9 &\qquad 6 &\qquad  8 &\qquad $x^7+x^6+x^5+x^3+x$ &\qquad $x^3+x+1$ \\
		11 &\qquad 9 &\qquad 10 &\qquad $x^5+x^3+x^2$ &\qquad $x^7+x^6+x^5+x+1$ \\ 
		13 &\qquad 8 &\qquad 10 &\qquad $x^7+x^6+x+1$ &\qquad $x^4+x^3+x^2+1$ \\[1ex] 
		\hline
	\end{tabular}
	\caption{\label{tab:table-9} Binary LCD FC codes.}
\end{table}
\begin{table}[h!]\label{TERFCLCD}
	\centering
	\begin{tabular}{||c c c c c||} 
		\hline
		$m$ & \qquad $d$ &\qquad $d^{\ast}$ &\qquad $a_1(x)$ &\qquad $a_2(x)$  \\ [0.5ex] 
		\hline\hline
		4 &\qquad  6 &\qquad 6 &\qquad $2x^3+x^2+1$ &\qquad $2x^3+1$ \\ 
		5 &\qquad 7 &\qquad 7 &\qquad $x^4+2x^2+x+2$ &\qquad $2x^4+2x^2+1$ \\
		7 &\qquad 8 &\qquad 9 &\qquad $x^6+2x^5+x^3+x$ &\qquad $2x^5+x^4+x^3+2$ \\
		8 &\qquad 9 &\qquad  10 &\qquad $2x^5+x^2+1$ &\qquad $x^5+x^4+x^3+2x+1$ \\[1ex] 
		\hline
	\end{tabular}
	\caption{\label{tab:table-10} Ternary LCD FC codes.}
\end{table}

We finish by characterizing LCP of FC codes. 

\begin{theorem}\label{FCLCP}
Let 
\begin{eqnarray*}
C&=&\langle (1,0,a_1(x), a_2(x)),(0,1,-a_2(x^{m-1}),a_1(x^{m-1})\rangle\\
D&=&\langle(1,0,b_1(x), b_2(x)),(0,1,-b_2(x^{m-1}),b_1(x^{m-1})\rangle
\end{eqnarray*} 
be FC codes of length $4m$ over $\mathbb{F}_q$. Then, $(C,D)$ is LCP if and only if 
$$\gcd \left( \sum_{t=1}^2 [(a_t(x)-b_t(x))(a_t(x^{m-1})-b_t(x^{m-1})] , x^m-1 \right)=1.$$
\end{theorem}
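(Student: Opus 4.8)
The plan is to reduce the LCP condition to the constituent level via (\ref{LCPQC cond}), and then, for each constituent pair, to test complementarity by computing the determinant of the $4\times 4$ matrix obtained by stacking the two generator matrices. By (\ref{LCPQC cond}), $(C,D)$ is LCP if and only if each of the pairs $(C_i,D_i)$, $(C'_j,D'_j)$, $(C''_j,D''_j)$ is LCP. Every constituent of an FC code is a $2$-dimensional subspace of the $4$-dimensional space over its field of definition (the identity block in (\ref{consts-2}) forces the two rows to be independent), so a pair of constituents is LCP exactly when these two $2$-dimensional spaces are complementary, i.e. when the $4\times 4$ matrix formed by stacking their generator matrices is nonsingular. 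First I would write the constituent generator matrices of $D$ using the recipe (\ref{consts-2}) with $b_1,b_2$ in place of $a_1,a_2$.

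Next I would compute the relevant determinant. Writing $\alpha$ for the root $\xi^{u_i}$ (resp. $\xi^{v_j}$, $\xi^{-v_j}$) attached to the constituent, the stacked matrix is
$$M=\begin{pmatrix} 1 & 0 & a_1(\alpha) & a_2(\alpha)\\ 0 & 1 & -a_2(\alpha^{-1}) & a_1(\alpha^{-1})\\ 1 & 0 & b_1(\alpha) & b_2(\alpha)\\ 0 & 1 & -b_2(\alpha^{-1}) & b_1(\alpha^{-1})\end{pmatrix}.$$
Subtracting the first row from the third and the second from the fourth clears the identity block and reduces $\det M$ to the determinant of the lower-right $2\times 2$ block. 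Setting $\delta_t(x):=a_t(x)-b_t(x)$, this $2\times 2$ determinant simplifies to $\delta_1(\alpha)\delta_1(\alpha^{-1})+\delta_2(\alpha)\delta_2(\alpha^{-1})$. Using $\alpha^{-1}=\alpha^{m-1}$ for an $m^{th}$ root of unity, this equals $B(\alpha)$, where
$$B(x):=\sum_{t=1}^2 (a_t(x)-b_t(x))(a_t(x^{m-1})-b_t(x^{m-1})).$$

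Then I would collect the conditions. The computation gives that $(C_i,D_i)$ is LCP iff $g_i(x)\nmid B(x)$, and likewise $(C'_j,D'_j)$, $(C''_j,D''_j)$ are LCP iff $h_j(x)\nmid B(x)$, resp. $h_j^{\ast}(x)\nmid B(x)$. Since $B(x)$ is self-reciprocal (it is invariant under $x\mapsto x^{m-1}$ in $R_m$), the conditions for $h_j$ and $h_j^{\ast}$ coincide, which is consistent. Hence $(C,D)$ is LCP iff $B(\alpha)\neq 0$ for every root $\alpha$ of $x^m-1$, i.e. iff no irreducible factor of $x^m-1$ divides $B(x)$; this is precisely $\gcd(B(x),x^m-1)=1$, the claimed condition.

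The only substantive work is the $4\times 4$ determinant evaluation, and the one point requiring care is bookkeeping the evaluation points $\alpha$ versus $\alpha^{-1}$ in the entries of $M$, so that the reduced $2\times 2$ determinant is correctly identified with $B$ evaluated at $\alpha$ through the identity $\alpha^{-1}=\alpha^{m-1}$; the self-reciprocity of $B$ then guarantees the reciprocal-pair constituents behave symmetrically.
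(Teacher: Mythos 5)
Your proposal is correct and follows essentially the same route as the paper's proof: reduce to constituents via the LCP criterion for QC codes, stack the two $2\times 4$ constituent generator matrices, clear the identity block by row operations, and observe that nonsingularity is equivalent to the nonvanishing of the lower-right $2\times 2$ determinant, which is exactly $B(\alpha)=\sum_{t}(a_t(\alpha)-b_t(\alpha))(a_t(\alpha^{-1})-b_t(\alpha^{-1}))$. Your uniform treatment of the three constituent types via a single evaluation point $\alpha$, together with the self-reciprocity remark, is a slightly tidier packaging of what the paper does case by case, but the substance is identical.
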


\begin{proof}
We will denote the $2\times 4$ matrices that generate the constituents of $C$ by $G_{C_i}, G_{C'_j}, G_{C''_j}$. Likewise, we denote the corresponding matrices for $D$ by $G_{D_i}, G_{D'_j}, G_{D''_j}$. The forms of these matrices are clear from the previous analysis and will also be evident in what follows. By (\ref{LCPQC cond}), $(C,D)$ is LCP of codes if and only if the following matrices are of full rank for all $i,j$: 
$$\begin{pmatrix}
1 & 0 & a_1(\xi^{u_i}) & a_2(\xi^{u_i})\\
0 & 1 & -a_2(\xi^{-u_i})& a_1(\xi^{-u_i}) \\
1 & 0 & b_1(\xi^{u_i}) & b_2(\xi^{u_i})\\
0 & 1 & -b_2(\xi^{-u_i})& b_1(\xi^{-u_i}) 
\end{pmatrix} \hspace{0.1cm}  \begin{pmatrix}
1 & 0 & a_1(\xi^{v_j}) & a_2(\xi^{v_j})\\
0 & 1 & -a_2(\xi^{-v_j})& a_1(\xi^{-v_j})\\
1 & 0 & b_1(\xi^{v_j}) & b_2(\xi^{v_j})\\
0 & 1 & -b_2(\xi^{-v_j})& b_1(\xi^{-v_j}) 
\end{pmatrix} \hspace{0.1cm} \begin{pmatrix}
1 & 0 & a_1(\xi^{-v_j}) & a_2(\xi^{-v_j})\\
0 & 1 & -a_2(\xi^{v_j})& a_1(\xi^{v_j})\\
1 & 0 & b_1(\xi^{-v_j}) & b_2(\xi^{-v_j})\\
0 & 1 & -b_2(\xi^{v_j})& b_1(\xi^{v_j}) 
\end{pmatrix}$$
By elementary row operations, the first matrix turns into 
$$\begin{pmatrix}
1 & 0 & a_1(\xi^{u_i}) & a_2(\xi^{u_i})\\
0 & 1 & -a_2(\xi^{-u_i})& a_1(\xi^{-u_i}) \\
0 & 0 & b_1(\xi^{u_i}) -a_1(\xi^{u_i})  & b_2(\xi^{u_i})-a_2(\xi^{u_i})\\
0 & 0 & -b_2(\xi^{-u_i})+a_2(\xi^{-u_i})& b_1(\xi^{-u_i})-a_1(\xi^{-u_i}) 
\end{pmatrix}, $$
which is of rank 4 if and only if the $2\times 2$ minor in the lower right corner is nonzero. This is equivalent to demanding that the irreducible factor $g_i(x)$ of $x^m-1$ does not divide the polynomial
$$ [(a_1(x)-b_1(x))(a_1(x^{m-1})-b_1(x^{m-1})] +  [(a_2(x)-b_2(x))(a_2(x^{m-1})-b_2(x^{m-1})].$$ 
The same analysis applied to the second and third matrices yield analogous consequences for the irreducible factors $h_j(x)$ and $h^{\ast}_j(x)$ of $x^m-1$. Hence the result follows. 
\end{proof}

Table 8 presents ternary LCP of FC codes with good security parameters. Here, $d$ represents the best security parameter via exhaustive search for the corresponding $m$ value and $d^{\ast}$ is the best minimum distance for a $[4m,2m]$ ternary linear code. 
\begin{table}[h!]\label{FCLCPEQ}
	\centering
	\begin{tabular}{||c c c c c||} 
		\hline
		$m$ & \qquad $d$ &\qquad $d^{\ast}$ &\qquad $a_1(x)$ &\qquad $a_2(x)$  \\ [0.5ex] 
		\hline\hline
		4 &\qquad  6 &\qquad 6 &\qquad $2x^3+2x^2+x$ &\qquad $x^2+1$ \\ 
		5 &\qquad 7 &\qquad 7 &\qquad $x^2+2x+1$ &\qquad $2x^3+x+1$ \\
		7 &\qquad 9 &\qquad 9 &\qquad $x^3+2x^2+1$ &\qquad $2x^5+2x^3+2x^2+x+1$ \\[1ex] 
		\hline
	\end{tabular}
	\caption{\label{tab:table-11} Ternary LCP of FC codes.}
\end{table}

\section*{Acknowledgment}
This paper is dedicated to Professor Sudhir Ghorpade on the occasion of his sixtieth birthday. T. Kalayc\i \ is supported by T\"UB\.ITAK Project under Grant 120F309.

\end{document}